\definecolor{darkred}{rgb}{.8 .1 .1}
\definecolor{darkgreen}{rgb}{.1 .8 .1}
\definecolor{darkyellow}{rgb}{.6 .6 .0}
\newcommand{\mm}[1]{\begin{align} #1 \end{align}}
\newcommand{\C}{\mathbb{C}}
\newcommand{\ket}[1]{\ensuremath\,|#1\rangle}
\newcommand{\ketbra}[2]{|#1\rangle \langle #2 |}
\newcommand{\tr}[2][]{\text{Tr}_{#1}\left(#2\right)}
\renewcommand{\rho}{\varrho}
\newcommand{\ot}{\otimes}
\newcommand{\id}{\mathds 1}
\newcommand{\idd}{\text{id}}
\newcommand{\sgn}{\text{sgn}}
\newcommand{\sS}{\mathbb{S}}
\newcommand{\NN}{\mathbb{N}}
\renewcommand{\eqref}[1]{Eq.~(\ref{#1})}
\renewcommand{\refeq}[1]{Eq.~(\ref{#1})}
\newcommand{\fpi}[2][A]{#2_#1}
\theoremstyle{definition}
\newtheorem{theorem}{Theorem}
\newtheorem{lemma}[theorem]{Lemma}
\newtheorem{corollary}[theorem]{Corollary}
\theoremstyle{remark}
\begin{document}

\title{
Multiparticle singlet states cannot be maximally entangled 
for the bipartitions
}

\author{Fabian Bernards}
\affiliation{Naturwissenschaftlich-Technische Fakultät, 
Universität Siegen, Walter-Flex-Straße 3, 57068 Siegen, Germany}

\author{Otfried Gühne}
\affiliation{Naturwissenschaftlich-Technische Fakultät, 
Universität Siegen, Walter-Flex-Straße 3, 57068 Siegen, Germany}

\date{\today}

\begin{abstract}
One way to explore multiparticle entanglement is to ask for 
maximal entanglement with respect to different bipartitions, 
leading to the notion of absolutely maximally entangled 
states or perfect tensors. A different path uses unitary 
invariance and symmetries, resulting in the concept of 
multiparticle singlet states. We show that these two 
concepts are incompatible in the sense that 
the space of pure multiparticle singlet states does not contain 
any state for which all partitions of two particles versus 
the rest are maximally entangled. This puts restrictions on the 
construction of quantum codes and contributes to discussions
in the context of the AdS/CFT correspondence and quantum gravity.
\end{abstract}
\maketitle

\section{Introduction}
The notion of multiparticle entanglement is relevant for different 
fields in physics, such as condensed matter physics or quantum 
information processing. A fundamental problem, however, lies in 
the exponential scaling of the dimension of the underlying Hilbert 
space, rendering an exhaustive classification difficult. So, in 
order to gain insight into multiparticle entanglement phenomena 
as well as to identify potentially interesting quantum states, 
different concepts based on symmetries \cite{werner, eggeling, symm1, symm2, symm3}, 
graphical representations \cite{graph1, graph2, graph3}, matrix product approximations \cite{mps1, mps2} or entanglement quantification \cite{maxe1, maxe2, maxe3, maxe4, maxe5, maxe6} 
can be used.

Indeed, it is a natural question to ask for states with maximal 
entanglement. A pure two-particle quantum state is maximally 
entangled if the reduced state for one particle is maximally 
mixed \cite{vidal}. One can extend this definition to the multiparticle 
case by considering bipartitions of the particles into two groups 
and asking whether the global state is maximally entangled
for the bipartitions. A multiparticle state is then absolutely 
maximally entangled (AME) if is it maximally entangled for all 
bipartitions. A weaker requirement is that all bipartitions of 
$k$ particles versus the rest shall be maximally entangled, these 
states are called $k$-uniform.

As AME states and $k$-uniform states are central for quantum error 
correction, they are under intensive research \cite{scott, grassl, felix}. 
Indeed, some recently solved questions concerning the sheer existence of AME 
states have been highlighted as central problems in quantum 
information theory \cite{karol1, karol2}. In addition to their use in quantum 
information processing, AME states, under the name of perfect tensors, 
have been used to construct toy models for the AdS/CFT correspondence 
\cite{pastawski, bhattacha} and to study the entanglement entropy in conformal 
field theories \cite{ryu}.

A different concept to explore the phenomena of multiparticle 
entanglement is the use of symmetries. Here, the study of unitary 
invariance has some tradition \cite{werner, eggeling, adan-singlets, huberdimension}. Consider the two-qubit 
singlet state 
\begin{equation}
 \ket{\psi^-} = \frac{1}{\sqrt{2}}(\ket{01}-\ket{10}).
 \label{eq-2qbsinglet}
\end{equation}
This is not only a maximally entangled state (as the reduced 
density matrices are maximally mixed), it also has the property 
that it is form-invariant under simultaneous application of local 
unitaries, that is,  $U \otimes U \ket{\psi^-} = e^{i \phi} \ket{\psi^-}$. 
Historically, this property was the key to understand the difference 
between quantum entanglement and the violation of Bell inequalities 
\cite{werner}.

For more particles, one finds more states with this kind of unitary 
symmetry, and such states are called general singlet states. 
For three three-level systems, there is the totally 
antisymmetric state
$
\ket{\psi_3} = 
(\ket{012}+\ket{201}+\ket{120}
- \ket{210}- \ket{102}-\ket{021}
)/\sqrt{6},
$
and for four qubits, the unitarily invariant states form a 
two-dimensional subspace \cite{rumer1, rumer2, pauling, beach}, spanned by 
a two-copy extension of the singlet state (\ref{eq-2qbsinglet}) 
and the four-qubit singlet state \cite{kempe, bourenanne}
\begin{align}
 \ket{\psi_4} &= \frac{1}{\sqrt{3}}
 \big[
 \ket{0011}
 +
 \ket{1100}
 \nonumber \\
 & -\frac{1}{2} (\ket{01}+ \ket{10}) \otimes (\ket{01}+ \ket{10})
 \big].
 \label{eq-4qbsinglet}
\end{align}
This two-dimensional subspace can be used to encode a quantum bit such
that it is immune against collective decoherence \cite{lidar, kempe, bourenanne}. 
In addition, 
multiparticle singlet states have turned out to be useful for various 
quantum information tasks, such as secret sharing and liar detection, 
see Ref.~\cite{adan-singlets} for an overview. Finally, states with unitary 
symmetry have attracted interest from the perspective of quantum gravity 
\cite{gravity1, gravity2, gravity3, sahlmann}.

It is a natural question to ask whether these two research lines are
connected, in the sense that one can find $k$-uniform states or even
AME states which belong the the invariant subspace of singlet states.
This question was first studied in Ref.~\cite{grassl-njp} where it 
has been shown that for four particles and arbitrary dimensions no 
AME singlet state can be found. Using methods from quantum gravity 
it was shown recently in Ref.~\cite{sahlmann} that for six qubits 
the singlet states cannot be AME. Combined with known results on the 
non-existence of AME states, this shows that multi-qubit singlet states 
cannot be AME (see also \cite{fabian-phd}).

In this paper we prove that the space of pure multiparticle 
singlet states of any particle number and any dimension does not
contain states which are two-uniform. Consequently, there are 
no AME states in this subspace for four or more particles of any dimension. 
From the viewpoint of quantum information processing this may 
be interpreted as showing that no code words of the usual quantum 
codes are within the singlet subspace, or that quantum code words 
cannot be immune against collective decoherence. Our proof technique 
seems conceptually simpler than previous approaches 
\cite{grassl-njp, sahlmann}.  
In order to be understandable for the different communities, we 
formulate in the following our approach self-contained and 
in mathematical terms.

\section{Properties of pure singlet states}

First, a {\it pure multiparticle singlet state} (also called pure Werner state)
is a quantum state 
$\ket{\psi} \in (\C^d)^{\ot n}$ such that for any 
unitary $U \in U(d)$ it holds that
\begin{equation}
U \ot \ldots \ot U \ket{\psi} = \xi \ket{\psi}, 
\label{eq-purewerner}
\end{equation}
where $\xi \in \C$ is a complex phase that depends on the 
unitary $U$ and the state. The function 
$\xi: U(d) \to \C$ that satisfies
\begin{equation}
U \ot \ldots \ot U \ket{\psi} = \xi(U) \ket{\psi}, 
\label{eq-phasefunc}
\end{equation}
is called the {\it phase function} of $\ket{\psi}$.
Clearly, the phase function $\xi$ is a group homomorphism, that is 
$ \xi(U_1 U_2) = \xi(U_1) \xi(U_2)$ and $\xi(\id) = 1$.
The phase function determines the action of unitaries on
a singlet state and will be the key to study properties
of singlet states for our purpose. In the following, we 
will consider the behaviour of this function for two 
special types of unitaries, local permutations of the 
basis vectors, and unitaries diagonal in the computational
basis.

\subsection{The phase function for local permutations}

In order to fix our notation, let us 
first note basic facts about the phase function 
in the case of permutations.

First, we take $\{\ket k\}$ for ${k \in \{0,...,d-1\}}$ as
the computational basis of $\C^d$ and consider unitary 
matrices that induce a permutation of these basis vectors. 
More formally, we consider the representation $(V, \C^d)$ of 
the symmetric (permutation) group $\sS_d$ which is defined as
\begin{equation}
V(\pi) \ket{k} = \ket{\pi(k)}
\end{equation}
and call it the {\it canonical representation} of $\sS_d$.

Second, let $(V, \C^d)$ be the canonical representation of 
$\sS_d$ and let $\xi$ be the phase function of the singlet 
state $\ket{\psi}$. Then we call
\begin{equation}
f (\pi) = \xi\big[V(\pi)\big]
\end{equation}
the {\it permutation phase function} of $\ket{\psi}$. The following 
Lemma puts a strong restriction to the permutation phase function 
of any singlet state.

\begin{lemma}
There are only two functions that are compatible with the definition 
of the permutation-phase function $f$, namely either 
$f(\pi) = 1, \, \forall \pi \in \sS_d$ or $f(\pi) = \sgn(\pi)$, 
the signum of the permutation.
\label{lemma1}
\end{lemma}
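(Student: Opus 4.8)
The plan is to recognise that $f$ is nothing but a one-dimensional complex representation (a linear character) of the symmetric group $\sS_d$, and then to invoke the classical fact that $\sS_d$ possesses only two such characters, the trivial one and the sign. Concretely, since $\xi$ is a group homomorphism (as noted below \eqref{eq-phasefunc}) and $V$ is a group homomorphism by virtue of being a representation of $\sS_d$, the composition $f = \xi \circ V$, given by $f(\pi) = \xi[V(\pi)]$, is itself a group homomorphism from $\sS_d$ into the multiplicative group of nonzero complex numbers; because $\xi$ returns a phase, $f$ even maps into the unit circle. The whole problem thus reduces to classifying all homomorphisms from $\sS_d$ into an abelian group.

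First I would exploit two standard structural facts about $\sS_d$: it is generated by transpositions, and for $d\ge 2$ all transpositions form a single conjugacy class. Since the codomain is abelian, any homomorphism is automatically a class function, as $f(\sigma \tau \sigma^{-1}) = f(\sigma) f(\tau) f(\sigma)^{-1} = f(\tau)$. Hence $f$ assumes one and the same value $c := f(\tau)$ on every transposition $\tau$. Each transposition satisfies $\tau^2 = \id$, so $c^2 = f(\tau^2) = f(\id) = 1$, which forces $c \in \{+1,-1\}$. Because transpositions generate $\sS_d$, this single number $c$ determines $f$ completely: writing an arbitrary $\pi$ as a product of $m$ transpositions yields $f(\pi) = c^m$. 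For $c = 1$ this gives $f(\pi) = 1$ for all $\pi$, and for $c = -1$ it gives $f(\pi) = (-1)^m = \sgn(\pi)$, with well-definedness guaranteed by the fact that the parity of $m$ is an invariant of $\pi$.

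I do not anticipate a genuine obstacle, since this is essentially the classification of the linear characters of $\sS_d$; the only points requiring care are to confirm that each $V(\pi)$ is unitary, so that $\xi$ is defined on it — immediate, as permutation matrices are unitary — and to dispose of the degenerate case $d=1$, where $\sS_1$ is trivial and the two candidate functions coincide. The essential structural input that forces the dichotomy is that the commutator subgroup of $\sS_d$ equals the alternating group, equivalently that all transpositions are mutually conjugate and square to the identity; everything else is bookkeeping.
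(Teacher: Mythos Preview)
Your argument is correct and follows essentially the same strategy as the paper: identify $f=\xi\circ V$ as a group homomorphism from $\sS_d$ into the unit circle, pin down its value on transpositions to be a common $c\in\{+1,-1\}$, and conclude from the fact that transpositions generate. The one point of difference is the mechanism for showing all transpositions receive the same value: you use that transpositions form a single conjugacy class and that any homomorphism into an abelian target is a class function, whereas the paper works with the Coxeter generators $\theta_k=(k,k+1)$ and extracts $f(\theta_k)=f(\theta_{k+1})$ directly from the braid relation $\theta_k\theta_{k+1}\theta_k=\theta_{k+1}\theta_k\theta_{k+1}$. Your route is slightly more conceptual and immediately identifies the problem as the classification of linear characters of $\sS_d$; the paper's route is a touch more elementary in that it never invokes conjugacy, only the defining relations of the presentation. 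Either way the content is the same.
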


\begin{proof}
Since $\xi$ is a group homomorphism, $f$ is a group homomorphism, 
too. So we only need to consider the images of the generators of 
$\sS_d$ under $f$ in order to characterize $f$.

The symmetric group $\sS_d$ is generated by the transpositions
$\theta_k = (k, k+1)$ of two neighboring elements with $k \in \{0,...,d-2\}$, 
so $f$ is determined by $f(\theta_k)$. Further, the transpositions fulfill 
the following relations
\begin{align}
\theta_k^2 &= \idd \label{eq-transpid}\\
\theta_k \theta_m &= \theta_m \theta_k, \text{ for } |k-m|\ge 2 \\
\theta_k \theta_m \theta_k &= \theta_m \theta_k \theta_m, \text{ for } |k-m| =
1. \label{eq-transprel3}
\end{align}
Since $f$ is a homomorphism, \refeq{eq-transpid} gives 
$f(\theta_k) = \pm 1$ and \refeq{eq-transprel3} gives
\begin{equation}
f(\theta_k) = f(\theta_m) \text{ for } |k-m| = 1.
\end{equation}
Hence, we only have to distinguish two cases: Either $f(\theta_k) = 1 \quad \forall k$
which implies that $f\equiv1$ or (2) $f(\theta_k) = -1 \quad\forall k$, which means that 
$f(\pi) = \sgn(\pi).$
\end{proof}

Lemma \ref{lemma1} allows  to derive a simple relation between some of 
the coefficients of a singlet state. We can state the result more elegantly 
by making use of a multi-index notation. We define a {\it multi-index} $i$ 
as a sequence of indices $i_{\alpha} \in \{0,\ldots, d-1\}$ with 
$\alpha \in \{1, \ldots, n\}$. In this notation, the action of the group 
$\sS_d$ on a multi-index $i = (i_1, \ldots, i_n)$ can be written as
\begin{equation}
\pi(i) = \big(\pi(i_1), \ldots, \pi(i_n)\big).
\end{equation}
Then we have the following Lemma.

\begin{lemma}
\label{lemma-permutationd}
Let $\ket{\psi} = \sum_i t_i \ket i \in (\C^d)^{\otimes n}$ be a pure 
singlet state and $\pi \in \sS_d$ a permutation. Then it holds that
\mm{
t_{\pi(i)} = f(\pi) t_i
}
where $f$ is the permutation-phase function of $\ket{\psi}$.
\end{lemma}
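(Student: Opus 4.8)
The plan is to feed the tensor power of the canonical representation into the defining relation of a pure singlet state and then simply read off what it says about the coefficients. Specialising \eqref{eq-phasefunc} to the unitary $U = V(\pi)$ gives
$$
V(\pi)^{\otimes n}\ket{\psi} = \xi\big[V(\pi)\big]\,\ket{\psi} = f(\pi)\,\ket{\psi},
$$
where the second equality is just the definition of the permutation-phase function. This one identity already contains everything; the remaining work is purely a matter of expanding both sides in the computational basis and matching coefficients.

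First I would expand the left-hand side. Since $V(\pi)$ acts locally as $V(\pi)\ket{k} = \ket{\pi(k)}$, the tensor power sends each basis vector $\ket{i} = \ket{i_1} \otimes \cdots \otimes \ket{i_n}$ to $\ket{\pi(i_1)} \otimes \cdots \otimes \ket{\pi(i_n)} = \ket{\pi(i)}$, using exactly the multi-index action defined above. Hence $V(\pi)^{\otimes n}\ket{\psi} = \sum_i t_i \ket{\pi(i)}$. Comparing this with the right-hand side $f(\pi)\ket{\psi} = \sum_i f(\pi)\, t_i \ket{i}$, I would match the coefficient of a fixed basis vector $\ket{m}$: on the left the only contributing term is the one with $\pi(i) = m$, i.e.\ $i = \pi^{-1}(m)$, so the coefficient is $t_{\pi^{-1}(m)}$, while on the right it is $f(\pi)\, t_m$. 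This yields $t_{\pi^{-1}(m)} = f(\pi)\, t_m$ for every multi-index $m$, and relabelling $m = \pi(i)$ turns it into $t_i = f(\pi)\, t_{\pi(i)}$.

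To reach the stated form I would invoke Lemma \ref{lemma1}, which guarantees $f(\pi) \in \{+1,-1\}$ and hence $f(\pi)^2 = 1$; multiplying $t_i = f(\pi)\, t_{\pi(i)}$ through by $f(\pi)$ then gives $t_{\pi(i)} = f(\pi)\, t_i$, as claimed. The argument is short, and there is no genuine obstacle beyond bookkeeping: the only points requiring a moment's care are that $\pi$ acts as a bijection on the set of multi-indices (which is what justifies the coefficient matching and the relabelling) and the final appeal to $f(\pi) = \pm 1$ to flip the relation into the desired orientation.
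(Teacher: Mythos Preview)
Your proof is correct and follows essentially the same route as the paper: specialise the singlet identity to $U=V(\pi)$, compare coefficients to obtain $t_{\pi^{-1}(m)} = f(\pi)\,t_m$, and then use $f(\pi)=\pm 1$ (Lemma~\ref{lemma1}) to reorient the relation. The only cosmetic difference is that the paper substitutes $\sigma=\pi^{-1}$ and invokes $f(\sigma)=f(\sigma^{-1})$, whereas you relabel and multiply through by $f(\pi)$; these are equivalent.
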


\begin{proof}
Taking the identity 
$
V(\sigma)^{\otimes n} \ket{\psi} = f(\sigma) \ket{\psi} 
$
and comparing the coefficients in front of the basis vector $\ket{i}$
results in
\begin{equation}
t_{\sigma^{-1}(i)}  = f(\sigma) t_i. 
\end{equation}
Choosing $\sigma = \pi^{-1}$ and using the fact that $f(\sigma)=f(\sigma^{-1})$
(which follows from $f(\sigma) = \pm 1$) proves the claim.
\end{proof}


\subsection{The phase function for diagonal unitaries}
We now characterize the phase function further by considering 
the restriction of $\xi$ to the diagonal unitaries. The key result
is that for a multiparticle singlet state expanded
in the computational basis many coefficients  have to vanish.
We start with a technical lemma.

\begin{lemma}
Let $\ket{\psi} = \sum_i t_i \ket{i}$ be a pure $n$-particle singlet state. Then, for any 
$k \in \{0, \ldots, d-1\}$, there exists a number $N_k \in \NN$ such that for
any non-zero coefficient $t_l \neq 0$, the multi-index $l$ contains the value
$k$ exactly $N_k$ times.

In other words, the non-vanishing terms of a singlet state in the computational
basis are tensor products $\ket{i_1, i_2, i_3, \dots, i_n}$ of $N_0$ times the 
single-particle state $\ket{0}$, $N_1$ times the single-particle state $\ket{1}$, 
etc. This naturally implies $\sum_{k=0}^{d-1} N_k =n$.
\label{lemma-samenumber}
\end{lemma}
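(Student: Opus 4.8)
The plan is to test the singlet condition \eqref{eq-phasefunc} against diagonal unitaries, whose action on the computational basis is merely a phase. For each fixed value $k \in \{0, \ldots, d-1\}$ I would introduce the one-parameter family
\begin{equation}
U_k(\phi) = \id + (e^{i\phi} - 1)\ketbra{k}{k} \in U(d),
\end{equation}
which multiplies $\ket{k}$ by $e^{i\phi}$ and fixes every other basis vector. Writing $n_k(i)$ for the number of times the value $k$ occurs in a multi-index $i = (i_1, \ldots, i_n)$, the tensor power acts diagonally as $U_k(\phi)^{\otimes n} \ket{i} = e^{i n_k(i)\phi}\ket{i}$.

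Inserting this into the defining relation and comparing the coefficient of each $\ket{i}$, I would obtain
\begin{equation}
t_i\, e^{i n_k(i)\phi} = \xi\big[U_k(\phi)\big]\, t_i \quad \text{for all } i.
\end{equation}
For every index $l$ in the support of $\ket{\psi}$ (that is, $t_l \neq 0$) this forces $e^{i n_k(l)\phi} = \xi[U_k(\phi)]$, and crucially the right-hand side is independent of $l$. Hence all non-vanishing coefficients acquire exactly the same phase under $U_k(\phi)^{\otimes n}$.

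It then remains to convert equality of phases into equality of counts. Since $e^{i n_k(l)\phi} = e^{i n_k(l')\phi}$ holds for every $\phi \in \R$ and any two supported indices $l, l'$, differentiating at $\phi = 0$ gives $n_k(l) = n_k(l')$; I would define $N_k$ to be this common value. Repeating the argument for each $k$ establishes that the occupation numbers are fixed across the whole support, and $\sum_{k} N_k = n$ follows because every multi-index has exactly $n$ entries. I do not anticipate a genuine obstacle: the essential insight is that diagonal unitaries, with one free phase per local level $k$, decouple the problem level by level, so that the scalar nature of $\xi$ pins down the weight of the state. The only point needing care is the last passage from $e^{ia\phi} \equiv e^{ib\phi}$ to $a = b$, which is immediate for integer $a, b$.
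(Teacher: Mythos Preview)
Your proposal is correct and follows essentially the same route as the paper: both apply the singlet condition to the one-parameter diagonal unitaries $U_k(\phi)=\id+(e^{i\phi}-1)\ketbra{k}{k}$, read off the induced phase $e^{i n_k(i)\phi}$ on each basis vector, and use the $l$-independence of $\xi[U_k(\phi)]$ together with arbitrariness of $\phi$ to conclude that $n_k(l)$ is constant on the support. Your explicit differentiation at $\phi=0$ just makes precise the step the paper leaves as ``this equation holds for arbitrary $\phi_k$''.
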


\begin{proof}
We consider \refeq{eq-purewerner} for diagonal unitary matrices.
Diagonal unitary matrices are generated by the matrices
\begin{equation}
U_k = \id + (e^{i\phi_k} - 1) \ketbra{k}{k},
\end{equation}
where all diagonal entries, except the $k$-th one, are $1$,
while the $k$-th diagonal entry is $e^{i\phi_k}$. Such a 
diagonal matrix acts on a single-particle computational basis 
vector $\ket a$ as
\begin{equation}
U_k \ket a = \begin{cases} e^{i\phi_k} \ket a, \text{ if } a = k, \\ \ket a, \,
\text{else.} \end{cases}
\end{equation}
We now determine the phase function $\xi(U_k)$ from \refeq{eq-purewerner}. This yields
\begin{align}
U_k^{\otimes n}\ket{\psi} 
= 
\sum_i t_i U_k \ket{i_1} \ot \ldots \ot U_k \ket{i_n} 
=  \sum_i e^{i\phi_k K_i} t_i  \ket i,
\end{align}
where $K_i$ is the number of times that $k$ appears in the multi-index $i$. 
This equation holds for arbitrary $\phi_k$; moreover, $\xi(U_k)$ affects all 
coefficients $t_j \neq 0$ in the same way. Hence, all indices $l$, for which 
$t_l \neq 0$ must contain $k$ the same number of times.
\end{proof}

In the following, we combine this result with the previous results on
local permutations of the basis vectors. Intuitively, one may  expect
that the numbers $N_k$ in Lemma \ref{lemma-samenumber} can not depend on $k$, 
as one can always change the index $k$ by local permutations, without
affecting the singlet state too much. This will indeed turn out to be 
the case.

Before formulating this in a precise manner, note that we can 
describe the action of a permutation $\omega \in \sS_n$ of the
particles on a multi-index as
\begin{equation}
\omega(i) = (i_{\omega(1)}, ..., i_{\omega(n)}).
\end{equation}
In this language, if $\ket{\psi} = \sum_i t_i \ket i$ is a pure singlet state
where $j,l$ are two multi-indices such that  $t_j \neq 0 $ and $t_l\neq 0$. 
Then there exists a permutation $\omega \in \sS_n$ of the particles such that 
$j = \omega(l)$. This follows directly from Lemma \ref{lemma-samenumber}.
More generally, we have:

\begin{lemma}
\label{lemma-samenumber2}
Let $\ket{\psi} = \sum_i t_i \ket i$ be a pure singlet state. Then 
there 
exists a number $K \in \NN$, such that for any non-zero coefficient $t_l \neq 0$, 
the multi-index $l$ contains each value $k \in \{0, \ldots, d-1\}$ exactly $K$ 
times.
\end{lemma}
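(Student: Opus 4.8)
The plan is to strengthen Lemma~\ref{lemma-samenumber}, which already supplies occurrence numbers $N_0,\dots,N_{d-1}$ with $\sum_k N_k = n$, by showing that they are all equal. The guiding intuition is exactly the one stated just above the lemma: a permutation of the local basis values leaves the singlet state invariant up to an overall sign, so it cannot genuinely distinguish one value from another, and hence the count $N_k$ should not depend on $k$. The tool that turns this intuition into a proof is Lemma~\ref{lemma-permutationd}, applied to value-permutations $\pi \in \sS_d$ acting through the canonical representation. These must be kept conceptually distinct from the particle-permutations $\omega \in \sS_n$ appearing in the discussion preceding the lemma.

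Concretely, I would fix two arbitrary values $k, k' \in \{0,\dots,d-1\}$ and consider the transposition $\tau = (k\,k') \in \sS_d$ that swaps them and fixes every other value. For any multi-index $l$ with $t_l \neq 0$, Lemma~\ref{lemma-permutationd} gives
\mm{
t_{\tau(l)} = f(\tau)\, t_l .
}
Since $f(\tau) = \pm 1$ by Lemma~\ref{lemma1}, this prefactor is nonzero, so $t_{\tau(l)} \neq 0$ as well; in other words, $\tau$ maps indices in the support of $\ket{\psi}$ to indices in the support.

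The decisive step is then a bookkeeping observation about how $\tau$ acts on occurrence counts. In $\tau(l)$ the value $k$ occupies exactly those positions at which $l$ carried the value $k'$, so $k$ appears in $\tau(l)$ precisely $N_{k'}$ times. On the other hand, $\tau(l)$ is itself a support index, so Lemma~\ref{lemma-samenumber} forces $k$ to appear in it exactly $N_k$ times. Comparing the two counts yields $N_k = N_{k'}$. As $k$ and $k'$ were arbitrary, all the $N_k$ collapse to a single value, which we name $K$ (necessarily $K = n/d$), completing the proof.

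The argument is short, and the only point requiring genuine care is this last counting step: one must track correctly that a value-transposition relabels the roles of the two values, so that the number of occurrences of $k$ after the swap equals the number of occurrences of $k'$ before it, rather than conflating this with a permutation of the particles. Beyond this elementary bookkeeping I expect no serious obstacle.
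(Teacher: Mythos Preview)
Your proposal is correct and essentially identical to the paper's own proof: both fix two values, apply the transposition swapping them via Lemma~\ref{lemma-permutationd}, use Lemma~\ref{lemma1} to conclude the image coefficient is nonzero, and then compare occurrence counts to get $N_k = N_{k'}$. The only differences are notational ($k'$ versus $m$, $\tau$ versus $\pi$).
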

\begin{proof}
Consider two values $k, m \in \{0, \ldots, d-1\}$ and let $N_k $ and $N_m$ be 
defined as in Lemma \ref{lemma-samenumber}. So, $N_k$ denotes how many times 
$k$ is contained in the multi-index $l$ of any coefficient $t_l \neq 0$. Now 
consider the local permutation $\pi= (k, m) \in \sS_d$, which just replaces
$k$ with $m$ and vice versa. According to Lemma \ref{lemma-permutationd}, 
we have
\begin{equation}
t_{\pi(l)} = f[(k,m)] t_{l}.
\end{equation}
Therefore, $t_l \neq 0$ implies $t_{\pi(l)} \neq 0$ due to Lemma 
\ref{lemma1}. However, $\pi(l)$ contains $k$ exactly $N_m$ times 
and $m$ exactly $N_k$ times. Thus, we must have $N_m = N_k$.
\end{proof}

It follows that for any pure singlet state $\ket{\psi} \in (\C^d)^{\otimes n}$, 
$n$ is always an integer multiple of $d$, that is $n = Kd$ for some $K \in \NN$.

In view of the previous results, the structure of the four-qubit singlet
state in Eq.~(\ref{eq-4qbsinglet}) becomes clearer now. For this state, 
we have $K=2,$ so it must be a superposition of tensor products with two 
$\ket{0}$ and two $\ket{1}$ factors. The phase function of two possible
permutations is just the identity. Note that Lemma \ref{lemma-samenumber2}
also implies that pure singlet states cannot exist for an odd number of qubits.

\section{Main results}
With the previous insights, we have characterized pure singlet states 
for our needs and we can now consider the second property, absolutely
maximally entangled states and $k$-uniform states. 

We start with introducing some notation on truncated 
multi-indices. Let $A \subset \{1, \ldots, n\}$ be a set 
of particles and $B = A^C =\{1, \ldots, n\} \setminus A$ be 
its complement. For a multi-index 
$i = (i_{\alpha})_{\alpha \in \{1, \ldots, n\}}$
we define the truncated
multi-indices
\begin{align}
i_A = (i_{\alpha})_{\alpha \in A}
\quad{\rm and} \quad
i_{B} = (i_{\alpha})_{\alpha \in B}.
\end{align}
Using this notation we express the marginal state of the subsystems 
contained in $A$ as 
\begin{align}
\varrho_A & =  \tr[B]{\ketbra{\psi}{\psi}}
=\sum_{i_A, j_A} \varrho_A (i_A; j_A)\ketbra{i_A}{j_A},
\end{align}
with the coefficients of the marginal density matrix
\begin{equation}
\varrho_A (i_A; j_A) = \sum_{i_B} t_{i_A,i_B} t^*_{j_A,i_B}.
\label{eq-element}
\end{equation}

Now we can define AME states and $k$-uniform states
in a very explicit manner that is useful for our purpose.
We call a state $\ket{\psi} \in (\C^d)^{\otimes n}$  a {\it $k$-uniform 
state}, if for any set of parties $A$ with $|A| \leq k$ 
the corresponding marginal is a maximally mixed state, that 
is
\begin{equation}
\varrho_A(i_A; j_A)  = \begin{cases} \frac 1{d^{|A|}} \text{, if } \fpi i =
\fpi j \\ 0 \text{, else.} \end{cases}
\end{equation}
A quantum state is {\it absolutely maximally entangled (AME)} if it is 
$k$-uniform 
with $k=\lfloor n/2 \rfloor$.

Note that for multiparticle singlet states the single-particle
marginals are always maximally mixed, so they are automatically 
$1$-uniform. This follows from the fact that the single-particle 
reduced state $\varrho_{\{1\}}$ is also unitarily invariant, that is
$U\varrho_{\{1\}}U^\dagger = \varrho_{\{1\}}$, and for a single particle
this implies that $\varrho_{\{1\}} = \openone /d$. But can singlet states 
be two-uniform?

We are ready to prove that this is never the case. To introduce the 
argument in a simplified setting, consider a putative six-qubit singlet 
state, which should be two-uniform. A six-qubit singlet state has terms 
consisting of three ``$0$'' and three ``$1$'' in the computational basis 
($K=3$ in Lemma 4). For a two uniform state, we should have $\varrho_{\{1,2\}} = \openone /4$, 
fixing the diagonal elements in any basis. So, for the diagonal element
$\varrho_{\{1,2\}}(0,0;0,0)$ in the computational basis it should hold that
\begin{align}
\varrho_{\{1,2\}}(0,0;0,0) &= |t_{(0,0,0,1,1,1)}|^2
+
|t_{(0,0,1,0,1,1)}|^2
\nonumber \\
&
+
|t_{(0,0,1,1,0,1)}|^2
+
|t_{(0,0,1,1,1,0)}|^2
\stackrel{!}{=} \frac{1}{4}.
\label{eq-6qb3}
\end{align}
If we consider also $\varrho_{\{1,2\}}(1,1;1,1)$ and sum over 
all qubit pairs $\alpha, \beta$ we obtain after a short 
calculation the condition
\begin{align}
\sum_{\alpha, \beta} \big[\varrho_{\{\alpha,\beta\}}(0,0;0,0)
+
\varrho_{\{\alpha,\beta\}}(1,1;1,1)\big]
\nonumber 
\\
=
2 {{3}\choose{2}}
\sum_i |t_i|^2 \stackrel{!}{=}\frac{1}{4}\times 2 \times {6 \choose 2} = \frac{15}{2},
\end{align}
where the prefactor $2 {{3}\choose{2}}=6$ can be understood
as coming from the fact that any $|t_i|^2$ contributes to 
$2 {{3}\choose{2}}$ terms of the form 
$\varrho_{\{\alpha,\beta\}}(\ell,\ell;\ell,\ell)$. But then, 
this condition is in direct contradiction to the normalization 
condition $\sum_i |t_i|^2=1$, so a six-qubit singlet state cannot
be two-uniform. 

This reasoning can directly be generalized to arbitrary dimensions
and numbers of particles. Indeed, in the general case with $K \geq 2$, 
we have
\begin{align}
\sum_{\alpha, \beta} 
\sum_{\ell=0}^{d-1}
\varrho_{\{\alpha,\beta\}}(\ell,\ell;\ell,\ell)
=
d {K \choose 2}
\sum_i |t_i|^2 \stackrel{!}{=}\frac{1}{d^2}\times d \times {n \choose 2},
\end{align}
where, as before, $K=n/d.$ Again, it can easily be seen that
this is in contradiction to the normalization 
$\sum_i |t_i|^2=1$. For the case $K=1$, however, one has 
$\varrho_{\{\alpha,\beta\}}(0,0;0,0)=0$ for any $\alpha, \beta$, 
consequently the marginals cannot be maximally mixed.
In summary, we have proved
the following: 

\begin{theorem}
Let $\ket{\psi} \in (\C^d)^{\otimes n}$ be a multiparticle singlet 
state. Then, not all two-particle reduced density matrices can
be maximally mixed. In other words, singlet states cannot be 
two-uniform.
\label{theorem-2uniform}
\end{theorem}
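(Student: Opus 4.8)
The plan is to argue by contradiction through a single global counting identity for the diagonal entries of the two-body marginals. Suppose $\ket{\psi} = \sum_i t_i \ket{i}$ is a two-uniform singlet state. Being a singlet state, it is subject to Lemma \ref{lemma-samenumber2}: there is an integer $K$ with $n = Kd$ such that every multi-index $i$ with $t_i \neq 0$ contains each value $k \in \{0,\dots,d-1\}$ exactly $K$ times. First I would read off from \eqref{eq-element} that the diagonal two-body element is just a sum of squared amplitudes,
\begin{equation}
\tau_{\{\alpha,\beta\}}(\ell,\ell;\ell,\ell) = \sum_{i:\, i_\alpha = i_\beta = \ell} |t_i|^2,
\end{equation}
i.e.\ it collects the weight of all basis states carrying the value $\ell$ on both sites $\alpha$ and $\beta$.

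Next I would sum this quantity over all particle pairs $\{\alpha,\beta\}$ and over all values $\ell \in \{0,\dots,d-1\}$, and evaluate the total in two ways. On the one hand, two-uniformity forces $\tau_{\{\alpha,\beta\}}(\ell,\ell;\ell,\ell) = 1/d^2$ for every pair and every $\ell$, so the sum equals $\frac{1}{d^2}\, d\, \binom{n}{2}$. On the other hand, I would exchange the order of summation and ask, for a fixed nonzero coefficient $t_i$, how often $|t_i|^2$ is produced: once for each pair $\{\alpha,\beta\}$ and value $\ell$ with $i_\alpha = i_\beta = \ell$. Here Lemma \ref{lemma-samenumber2} does the essential work: since each value occurs in $i$ exactly $K$ times, there are precisely $\binom{K}{2}$ such pairs per value, hence $d\binom{K}{2}$ contributions in total, \emph{independently of which} $i$ is chosen. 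The coefficient side of the identity is therefore $d\binom{K}{2}\sum_i |t_i|^2 = d\binom{K}{2}$ by normalization.

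Equating the two evaluations yields $d\binom{K}{2} = \frac{1}{d}\binom{n}{2}$. Substituting $n = Kd$ and simplifying the binomials reduces this to $d(K-1) = Kd - 1$, i.e.\ $d = 1$. Since the case of interest has local dimension $d \geq 2$, this is the desired contradiction, so no two-uniform singlet state can exist; the case $d=1$ is the degenerate one-dimensional local Hilbert space, which carries no entanglement. I expect the only delicate point to be the multiplicity count in the second evaluation: the argument collapses to a single numerical identity precisely because the singlet constraint of Lemma \ref{lemma-samenumber2} makes the number of contributing (pair, value) combinations equal to the same constant $d\binom{K}{2}$ for every surviving basis state, which is what lets the normalization $\sum_i |t_i|^2 = 1$ be factored out. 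Without this uniformity the two sides would not reduce to a clean equation.
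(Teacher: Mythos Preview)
Your proposal is correct and follows essentially the same argument as the paper: you sum the diagonal entries $\tau_{\{\alpha,\beta\}}(\ell,\ell;\ell,\ell)$ over all particle pairs and all values $\ell$, evaluate once via two-uniformity and once via the uniform multiplicity $K$ guaranteed by Lemma~\ref{lemma-samenumber2}, and obtain a contradiction with normalization unless $d=1$. The paper presents the six-qubit case first as motivation before stating the general identity, but the core counting and the resulting equation $d\binom{K}{2}=\frac{1}{d}\binom{n}{2}$ are identical to yours.
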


For $n\geq 4$ AME states need to be two-uniform, while for $n\leq 3$
AME states need to be one-uniform only. This directly implies:

\begin{corollary}
Let $\ket{\psi} \in (\C^d)^{\otimes n}$ be an AME singlet state. 
Then either $n = d = 2$ or $n = d = 3$. For these two cases AME 
singlet states indeed exist as explained in the introduction.
For all other cases of $n$ and $d$ there are no AME singlet 
states.
\label{theorem-qubit}
\end{corollary}

\section{Conclusion}
We have shown that the space of unitarily invariant 
pure quantum states does not contain any two-uniform state. Especially, 
an invariant state cannot be absolutely maximally entangled. 
This generalizes previous results and closes some debates from the 
literature \cite{grassl-njp, sahlmann, fabian-phd}.

There are several ways in which our work can be generalized. 
First, it would be interesting to weaken the condition of 
unitary invariance, by considering a subgroup of the unitary 
group $U(d)$. This may help to understand the set of decoherence 
processes, under which code words of a given quantum code can 
be made robust. Second, the characterization of entangled subspaces 
has attracted recent interest \cite{entsp1, entsp2, entsp3}. In this 
terminology, we characterized the entangled subspace of multiparticle
singlets. It would be very interesting whether similar results can also
be obtained for other constructions of entangled subspaces.

\section{Acknowledgements}
We thank 
Felix Huber,
Balázs Pozsgay,
Hanno Sahlmann,
and
Jonathan Steinberg
for discussions. 
This work was supported by the Deutsche Forschungsgemeinschaft (DFG, German Research 
Foundation, project numbers 447948357 and 440958198), the Sino-German Center 
for Research Promotion (Project M-0294), the ERC (Consolidator Grant 683107/TempoQ) 
and the German Ministry of Education and Research (Project QuKuK, BMBF Grant
No.~16KIS1618K).
FB acknowledges support from the House of Young Talents of the University of Siegen.


\end{document}